\newtheorem{theorem}{Theorem}
\newtheorem{lemma}[theorem]{Lemma}
\newtheorem{remark}[theorem]{Remark}
\newenvironment{proof}[1][Proof]{\textbf{#1.} }{\ \rule{0.5em}{0.5em}}
\begin{document}

\title{The Gough-James Theory of Quantum Feedback Networks in the Belavkin
Representation}
\author{O. G. Smolyanov$^{1}$, A. Truman$^{2}$}
\date{}
\maketitle

\begin{abstract}
The mathematical theory of quantum feedback networks has recently been
developed by Gough and James \cite{QFN1} for general open quantum dynamical systems interacting
with bosonic input fields. In this article we show, that their feedback reduction formula for the coefficients of the closed-loop
quantum stochastic differential equation can be formulated in terms of Belavkin matrices. We show that the reduction formula leads to a non-commutative Mobius transformation based on Belavkin matrices, and establish a $\star$-unitary version of the Siegel identities.
\end{abstract}

1) Department of Mechanics and Mathematics, Moscow State University,
Vorob'evy Gory, Moscow, 119899 Russia.

2) Department of Mathematics, Swansea University, Swansea, Wales UK.

\section{Introduction}

In a recent publication \cite{QFN1} Gough and James have introduced a model
for a quantum feedback network. Each component may be modelled in isolation
as a Hudson-Parthasarathy model, valid for quantum optical models, and
represented as a single vertex with an equal number of external inputs and
outputs carried along semi-infinite transmission lines represented as
directed edges. The algebraic procedure is to collect all the operator
coefficients of the associated quantum stochastic differential equation
governing all components into a matrix. This gives the open-loop
description, and feedback is introduced by connecting various input/output
lines to giver internal edges. They use a Hamiltonian model for the entire
network which generalizes the Chebotarev-Gregoratti Hamiltonian which
describes the propagation of the fields along the edges and their
interaction at the vertices. They obtain a Markovian limit for the network
in a zero time delay limit, eliminating the internal edges in the process.
The limit quantum stochastic differential equation is then described by a
reduced operator matix.

We show that their feedback reduction formula is a M\"{o}bius transformation
that can naturally be extended to mappings into the Belavkin matrix
representation of quantum stochastic calculus. In particular we interpret
this as a non-commutative M\"{o}bius transformation between $\star $%
-unitaries. We begin by recalling the basic notions of quantum stochastic
calculus and its Belavkin formulation.

\subsection{Quantum Stochastic Processes}

The Hudson-Parthasarathy theory of quantum stochastic\ calculus considers
quantum stochastic processes as operator valued processes on Hilbert spaces
of the form $\frak{H}=\frak{h}_{0}\otimes \Gamma \left( \frak{k}\otimes
L^{2}[0,\infty )\right) $ where $\frak{h}_{0}$ is a fixed Hilbert space,
called the \emph{initial space}, and $\frak{k}$ is a fixed Hilbert space
called the \emph{internal space}. We shall be interested in the finite
dimensional case $\frak{k}=\mathbb{C}^{n}$ where $n\geq 1$. Here $\Gamma
\left( t\right) $ denotes the second quantization functor to (Bosonic) Fock
space. We shall denote the time variable $t$ by $A_{t}^{00}$. Taking $%
\left\{ \left| e_{i}\right\rangle :i=1,\cdots ,n\right\} $ to be an
orthonormal basis for $\frak{k}$, the creation process to state $\left|
e_{i}\right\rangle $ will be denoted as $A_{t}^{i0}$, while its adjoint, the
annihilator for the state, is denoted as $A_{t}^{0i}$. The scattering
process from state $\left| e_{j}\right\rangle $ to state $\left|
e_{i}\right\rangle $ will be denoted as $A_{t}^{ij}$. In this way, we have
the $\left( 1+n\right) \times \left( 1+n\right) $ \emph{fundamental quantum
processes} $A_{t}^{\alpha \beta }$. (We adopt the convention that Latin
indices range over $1,\cdots ,n$ while Greek indices range over $0,1,\cdots
,n$. We also apply a summation convention for repeated indices over the
corresponding ranges.) We note that we have $\left( A_{t}^{\alpha \beta
}\right) ^{\dag }=A_{t}^{\beta \alpha }.$

As is well known $\frak{H}$ decomposes as $\frak{H}_{\left[ 0,t\right]
}\otimes \frak{H}_{\left( t,\infty \right) }$ for each $t>0$ where $\frak{H}%
_{\left[ 0,t\right] }=\frak{h}_{0}\otimes \Gamma \left( \frak{k}\otimes
L^{2}[0,t)\right) $ and $\frak{H}_{\left( t,\infty \right) }=\Gamma \left( 
\frak{k}\otimes L^{2}(t,\infty )\right) $. We shall write $\frak{A}_{t]}$
for the space of operators on $\frak{H}$ that act trivially on the future
component $\frak{H}_{\left( t,\infty \right) }$. A quantum stochastic
process $X_{t}=\left\{ X_{t}:t\geq 0\right\} $ is said to be \emph{adapted}
if $X_{t}\in \frak{A}_{t]}$ for each $t\geq 0$.

Taking $\left\{ x_{\alpha \beta }\left( t\right) :t\geq 0\right\} $ to be a
family of adapted quantum stochastic processes, we may then form their
quantum stochastic integral $X_{t}=\int_{0}^{t}x_{\alpha \beta }\left(
s\right) dA_{s}^{\alpha \beta }$ where the differentials are understood in
the It\={o} sense. Given a similar quantum It\={o} integral $Y_{t}$, with $%
dY_{t}=y_{\alpha \beta }\left( t\right) dA_{t}^{\alpha \beta }$, we have the
quantum It\={o} product rule 
\begin{equation}
d\left( X_{t}.Y_{t}\right) =dX_{t}.Y_{t}+X_{t}.dY_{t}+dX_{t}.dY_{t},
\label{Ito formula}
\end{equation}
with the It\={o} correction given by 
\begin{equation}
dX_{t}.dY_{t}=x_{\alpha k}\left( t\right) y_{k\beta }\left( t\right)
dA_{t}^{\alpha \beta }.  \label{Ito correction}
\end{equation}

The coefficients $\left\{ x_{\alpha \beta }\left( t\right) \right\} $ may be
assembled into a matrix 
\begin{equation}
\mathbf{X}_{t}=\left( 
\begin{tabular}{l|l}
$x_{00}\left( t\right) $ & $x_{0\bullet }\left( t\right) $ \\ \hline
$x_{\bullet 0}\left( t\right) $ & $x_{\bullet \bullet }\left( t\right) $%
\end{tabular}
\right) \in \frak{A}_{t]}^{\left( 1+n\right) \times \left( 1+n\right) },
\label{Ito matrix}
\end{equation}
which we call the \emph{It\={o} matrix} for the process. (Here we use the
convention that $x_{0\bullet }\left( t\right) $\ denotes the row vector with
entries $\left( x_{0j}\left( t\right) \right) _{j=1}^{n}$, etc. The It\={o}
matrix for a product $X_{t}Y_{t}$ of quantum It\={o} integrals will then
have entries $\left\{ x_{\alpha \beta }\left( t\right) Y_{t}+X_{t}y_{\alpha
\beta }\left( t\right) +x_{\alpha k}\left( t\right) y_{k\beta }\left(
t\right) \right\} $ and is therefore given by $\mathbf{X}_{t}Y_{t}+X_{t}%
\mathbf{Y}_{t}+\mathbf{X}_{t}\mathbf{PY}_{t}$, where $\mathbf{P}:=\left( 
\begin{tabular}{l|l}
$0$ & $0$ \\ \hline
$0$ & $\mathtt{I}_{n}$%
\end{tabular}
\right) $.

\subsection{Belavkin's Matrix Representation}

We consider the mapping from It\={o} matrices $\mathbf{X}\in \frak{A}%
^{\left( 1+n\right) \times \left( 1+n\right) }$ to associated \textit{%
Belavkin matrices} 
\begin{equation}
\mathbb{X}=\left( 
\begin{tabular}{l|l|l}
$0$ & $x_{0\bullet }$ & $x_{00}$ \\ \hline
$0$ & $x_{\bullet \bullet }$ & $x_{\bullet 0}$ \\ \hline
$0$ & $0$ & $0$%
\end{tabular}
\right) \in \frak{A}^{\left( 1+n+1\right) \times \left( 1+n+1\right) }.
\label{Belavkin matrix}
\end{equation}
We also introduce 
\begin{equation*}
\begin{array}{cc}
\mathbb{I}_{n}:=\left( 
\begin{tabular}{l|l|l}
$1$ & $0$ & $0$ \\ \hline
$0$ & $\mathtt{I}_{n}$ & $0$ \\ \hline
$0$ & $0$ & $1$%
\end{tabular}
\right) , & \mathbb{J}_{n}:=\left( 
\begin{tabular}{l|l|l}
$0$ & $0$ & $1$ \\ \hline
$0$ & $\mathtt{I}_{n}$ & $0$ \\ \hline
$1$ & $0$ & $0$%
\end{tabular}
\right) ,
\end{array}
\end{equation*}
where $\mathtt{I}_{n}$ is the $n\times n$ identity matrix. The subscripts $n$
will be generally dropped from now on for convenience. We have the following
identifications 
\begin{eqnarray*}
\mathbf{X}^{\dag } &\longleftrightarrow &\mathbb{X}^{\star }:=\mathbb{JX}%
^{\dag }\mathbb{J}\text{,} \\
\mathbf{XPY} &\longleftrightarrow &\mathbb{XY}\text{,} \\
\mathbf{XY} &\longleftrightarrow &\mathbb{XJY}\text{.}
\end{eqnarray*}

We shall refer to $\mathbb{X}^{\star }$ as the $\star -$\emph{involution} of 
$\mathbb{X}$. The It\={o} differential $dX_{t}=x_{\alpha \beta }\left(
t\right) dA_{t}^{\alpha \beta }$ may then be written as 
\begin{equation*}
dX_{t}=tr\left\{ \mathbb{X}_{t}d\widetilde{\mathbb{A}}_{t}\right\} ,
\end{equation*}
where (with $\prime $ denoting the usual transpose for arrays) 
\begin{equation*}
d\widetilde{\mathbb{A}}_{t}:=\left( 
\begin{tabular}{l|l|l}
$0$ & $0$ & $0$ \\ \hline
$\left( dA^{0\bullet }\right) ^{\prime }$ & $\left( dA^{\bullet \bullet
}\right) ^{\prime }$ & $0$ \\ \hline
$dA^{00}$ & $\left( dA^{\bullet 0}\right) ^{\prime }$ & $0$%
\end{tabular}
\right) .
\end{equation*}
The main advantage of using this representation is that the It\={o}
correction $\mathbf{XPY}$ ca now be given as just the ordinary product $%
\mathbb{XY}$\ of the Belavkin matrices.

Let $X_{t}$ and $Y_{t}$ be quantum stochastic integrals, then the quantum
It\={o} product rule may be written as 
\begin{equation}
d\left( X_{t}Y_{t}\right) =tr\left\{ \left[ \left( X_{t}\text{$\mathbb{I}$}+%
\mathbb{X}_{t}\right) \left( Y_{t}\text{$\mathbb{I}$}+\mathbb{Y}_{t}\right)
-\left( X_{t}Y_{t}\right) \text{$\mathbb{I}$}\right] d\widetilde{\mathbb{A}}%
_{t}\right\} .
\end{equation}
The process $f\left( X_{t}\right) $ has differential $df\left( X_{t}\right)
=tr\{[f\left( X_{t}\text{$\mathbb{I}$}+\mathbb{X}_{t}\right) -f\left(
X_{t}\right) \mathbb{I}]d\widetilde{\mathbb{A}}_{t}\}.$

\subsection{Evolutions and Dynamical Flows}

Hudson and Parthasarathy \cite{HP} show that the quantum stochastic
differential equation (QSDE) 
\begin{equation}
dV_{t}=tr\left\{ \mathbb{G}V_{t}d\widetilde{\mathbb{A}}_{t}\right\} ,\quad
V_{0}=1,
\end{equation}
has a unique solution for a given constant Belavkin matrix $\mathbb{G}=%
\mathbb{V}-\mathbb{I}$ of coefficients on $\frak{B}\left( \frak{h}%
_{0}\right) $, the bounded operators on $\frak{h}_{0}$. Necessary and
sufficient conditions for unitarity are then given in Belavkin
representation by 
\begin{equation*}
\left( \mathbb{I}+\mathbb{G}\right) \left( \mathbb{I}+\mathbb{G}\right)
^{\star }=\mathbb{I}=\left( \mathbb{I}+\mathbb{G}\right) ^{\star }\left( 
\mathbb{I}+\mathbb{G}\right) .
\end{equation*}
This states that $\mathbb{V}=\mathbb{I}+\mathbb{G}$ is $\star $-unitary on $%
\frak{B}\left( \frak{h}_{0}\right) ^{\left( 1+n+1\right) \times \left(
1+n+1\right) }$, that is 
\begin{equation}
\mathbb{VV}^{\star }=\mathbb{I}=\mathbb{V}^{\star }\mathbb{V}.
\end{equation}
We may write the QSDE as $dV_{t}=tr\left\{ \left[ \mathbb{V}V_{t}-\mathbb{I}%
V_{t}\right] d\widetilde{\mathbb{A}}_{t}\right\} $.

\begin{lemma}
The most general form for $\mathbb{V}$ leading to a unitary is 
\begin{equation}
\mathbb{V}=\left( 
\begin{tabular}{l|l|l}
$1$ & $-L^{\dag }S$ & $-\frac{1}{2}L^{\dag }L-iH$ \\ \hline
$0$ & $S$ & $L$ \\ \hline
$0$ & $0$ & $1$%
\end{tabular}
\right) \equiv \left( 
\begin{tabular}{l|l|l}
$V_{00}$ & $V_{0k}$ & $V_{00^{\prime }}$ \\ \hline
$V_{k0}$ & $V_{kk}$ & $V_{k0^{\prime }}$ \\ \hline
$V_{0^{\prime }0}$ & $V_{0^{\prime }k}$ & $V_{0^{\prime }0^{\prime }}$%
\end{tabular}
\right) .  \label{V}
\end{equation}
where $S$ is a unitary in $\frak{B}\left( \frak{h}_{0}\right) ^{n\times n}$, 
$L$ is a column vector length $n$ with entries in $\frak{B}\left( \frak{h}%
_{0}\right) $, and $H$ is self-adjoint in $\frak{B}\left( \frak{h}%
_{0}\right) $.
\end{lemma}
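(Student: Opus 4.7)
The plan is to read off the block structure of $\mathbb{V}=\mathbb{I}+\mathbb{G}$ directly from the Belavkin matrix convention (\ref{Belavkin matrix}), impose $\star$-unitarity entry-by-entry, and then solve the resulting algebraic equations. Because the first column of $\mathbb{G}$ vanishes apart from a zero in the top-left, and similarly for the last row, $\mathbb{V}$ has the $3\times 3$ block upper-triangular shape
\[
\mathbb{V}=\left(
\begin{array}{c|c|c}
1 & b & a \\ \hline
0 & D & c \\ \hline
0 & 0 & 1
\end{array}
\right),
\]
where $a\in\frak{B}(\frak{h}_0)$, $b$ is a row, $c$ is a column, and $D$ is an $n\times n$ operator matrix on $\frak{h}_0$. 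Conjugation by $\mathbb{J}$ swaps the first and last rows and columns, so $\mathbb{V}^\star=\mathbb{J}\mathbb{V}^{\dag}\mathbb{J}$ is again block upper-triangular with corners $1$, middle block $D^\dag$, and off-diagonal entries $(c^\dag,b^\dag,a^\dag)$ in the obvious positions.

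Next I would expand $\mathbb{V}\mathbb{V}^\star=\mathbb{I}$ block by block. The $(2,2)$-entry yields $DD^\dag=\mathtt{I}_n$; the $(2,3)$-entry yields $Db^\dag+c=0$, i.e.\ $b=-c^\dag D$; and the $(1,2)$-entry gives the consistent condition $bD^\dag+c^\dag=0$. Writing $L:=c$ and $S:=D$, this forces $b=-L^\dag S$. Finally the $(1,3)$-entry gives $a+a^\dag+bb^\dag=0$, and substituting $b=-L^\dag S$ together with $SS^\dag=\mathtt{I}_n$ reduces this to $a+a^\dag=-L^\dag L$. Decomposing $a$ into its self-adjoint and anti-self-adjoint parts produces the unique representation $a=-\tfrac{1}{2}L^\dag L-iH$ with $H=H^\dag\in\frak{B}(\frak{h}_0)$, which is exactly (\ref{V}).

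To finish, I would run the symmetric computation $\mathbb{V}^\star\mathbb{V}=\mathbb{I}$. This repeats the off-diagonal relations consistently and additionally delivers $D^\dag D=\mathtt{I}_n$; combined with $DD^\dag=\mathtt{I}_n$ it promotes $S$ to a genuine unitary in $\frak{B}(\frak{h}_0)^{n\times n}$ (which is needed because $\frak{h}_0$ may be infinite-dimensional, so the two one-sided inverses must be imposed separately). The $(1,3)$-entry reproduces $a+a^\dag=-L^\dag L$, so no new constraint on $H$ arises. Conversely, plugging the parametrization back into $\mathbb{V}\mathbb{V}^\star$ and $\mathbb{V}^\star\mathbb{V}$ verifies $\star$-unitarity, showing the form is not only necessary but sufficient.

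The computation is almost entirely mechanical block-matrix algebra; the only place requiring care is the step $a+a^\dag=-L^\dag L\Rightarrow a=-\tfrac{1}{2}L^\dag L-iH$, where one must note that $-\tfrac{1}{2}L^\dag L$ is self-adjoint, so $H:=i\bigl(a+\tfrac{1}{2}L^\dag L\bigr)$ is automatically self-adjoint and uniquely determined. I do not expect a serious obstacle: all the non-triviality has been absorbed into the Belavkin bookkeeping recalled earlier in the excerpt.
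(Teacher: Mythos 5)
Your computation is correct, and it is worth noting that it is genuinely more than what the paper supplies: the paper's entire ``proof'' of this lemma is the single sentence that it ``follows from the analysis of \cite{HP}'', i.e.\ a deferral to Hudson--Parthasarathy's original derivation of the unitarity conditions on the It\={o} coefficients $(S,L,H)$. You instead work entirely on the Belavkin side, taking as given the equivalence (stated just before the lemma) between unitarity of $V_t$ and $\star$-unitarity of $\mathbb{V}$, and then extracting the parametrization by expanding $\mathbb{VV}^{\star}=\mathbb{I}=\mathbb{V}^{\star}\mathbb{V}$ block by block. Your block bookkeeping checks out: $\mathbb{V}^{\star}=\mathbb{JV}^{\dag}\mathbb{J}$ is indeed upper triangular with middle block $D^{\dag}$ and off-diagonal data $(c^{\dag},b^{\dag},a^{\dag})$, the $(2,2)$ and $(1,2)$/$(2,3)$ entries of the two products give $DD^{\dag}=D^{\dag}D=\mathtt{I}_n$ and $b=-c^{\dag}D$, and the corner entry gives $a+a^{\dag}=-L^{\dag}L$, whence $a=-\tfrac{1}{2}L^{\dag}L-iH$ with $H=i(a+\tfrac{1}{2}L^{\dag}L)$ self-adjoint. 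You are also right to flag that both one-sided identities are needed to make $S$ unitary when $\frak{h}_0$ is infinite dimensional (and, relatedly, that solving $Db^{\dag}=-c$ for $b$ uniquely uses the left-invertibility of $D$ obtained from the second identity --- the cleanest route is to read $b=-c^{\dag}D$ directly off the $(1,2)$ entry of $\mathbb{V}^{\star}\mathbb{V}$). What your approach buys is a self-contained, purely algebraic verification within the formalism the paper is advocating; what the paper's citation buys is the harder analytic content (existence, uniqueness and unitarity of the solution $V_t$ of the QSDE given these coefficient conditions), which your argument, like the paper's lemma statement itself, takes as input from \cite{HP}.
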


The proof follows from the analysis of \cite{HP}. The triple $\left(
S,L,H\right) $ is termed the Hudson-Parthasarathy parameters of the open
system evolution. In standard notation the QSDE reads as (summ over all
field multiplicities) 
\begin{equation*}
dV_{t}=\left\{ \left( S_{jk}-\delta _{jk}\right)
dA_{t}^{jk}+L_{j}dA_{t}^{j0}-L_{j}^{\ast }S_{jk}dA_{t}^{0j}-\left( \frac{1}{2%
}L_{j}^{\ast }L_{j}+iH\right) dA_{t}^{00}\right\} V_{t}.
\end{equation*}
We interpret $A_{t}^{\alpha \beta }$ as the input noise and $V_{t}^{\ast
}A_{t}^{\alpha \beta }V_{t}$ as the output noise.

\section{Quantum Cascaded Systems}

If two systems are cascaded in series then the Hudson-Parthasarathy
parameters of the composite system were shown to be \cite{GJ Series},\cite
{QFN1} 
\begin{eqnarray*}
S_{\text{series}} &=&S_{2}S_{1}, \\
L_{\text{series}} &=&L_{2}+S_{2}L_{1}, \\
H_{\text{series}} &=&H_{1}+H_{2}+\text{Im}\left\{ L_{2}^{\dag
}S_{2}L_{1}\right\} .
\end{eqnarray*}
Here the output of the first sytem $\left( S_{1},L_{1},H_{1}\right) $ is fed
forward as the input to the second system $\left( S_{2},L_{2},H_{2}\right) $
and the limit of zero time delay is assumed. As remarked in \cite{QFN1}, the
series product actually arises natural in Belavkin matrix form as 
\begin{equation*}
\mathbb{V}_{\text{series}}=\mathbb{V}_{2}\mathbb{V}_{1}.
\end{equation*}
The product is clearly associative, as one would expect physically, and the
general rule for several systems in series is then $\mathbb{V}_{\text{series}%
}=\mathbb{V}_{n}\cdots \mathbb{V}_{2}\mathbb{V}_{1}$

\section{General Feedback Reduction Formula}

The internal edges may be eliminated in a zero time delay limit to obtain a
reduced model. Let $0<n_{\mathtt{i}}<n$ be then number of internal edges to
be eliminated, and let $n_{\mathtt{e}}=n-n_{\mathtt{i}}$ be the remaining
edges. The algebraic information about the original network is contained in
the matrix $\mathbb{V}$ which we partition as 
\begin{equation*}
\mathbb{V}=\left( 
\begin{tabular}{l|l|l}
$V_{00}$ & $V_{0\mathtt{e}}\;V_{0\mathtt{i}}$ & $V_{00^{\prime }}$ \\ \hline
$
\begin{array}{c}
V_{\mathtt{e}0} \\ 
V_{\mathtt{i}0}
\end{array}
$ & $
\begin{array}{cc}
V_{\mathtt{ee}} & V_{\mathtt{ei}} \\ 
V_{\mathtt{ie}} & V_{\mathtt{ii}}
\end{array}
$ & $
\begin{array}{c}
V_{\mathtt{e}0^{\prime }} \\ 
V_{\mathtt{i}0^{\prime }}
\end{array}
$ \\ \hline
$V_{0^{\prime }0}$ & $V_{0^{\prime }\mathtt{e}}\;V_{0^{\prime }\mathtt{i}}$
& $V_{0^{\prime }0^{\prime }}$%
\end{tabular}
\right) .
\end{equation*}
Here we decompose indices into two groups $\mathtt{e}$ and $\mathtt{i}$
distinguishing external and internal. That is, $V_{00}=V_{0^{\prime
}0^{\prime }}=1$, $V_{\mathtt{e}0}=V_{\mathtt{i}0}=V_{0^{\prime
}0}=V_{0^{\prime }\mathtt{e}}=V_{0^{\prime }\mathtt{i}}=0$%
\begin{eqnarray*}
S &=&\left[ 
\begin{array}{cc}
V_{\mathtt{ee}} & V_{\mathtt{ei}} \\ 
V_{\mathtt{ie}} & V_{\mathtt{ii}}
\end{array}
\right] =\left[ 
\begin{array}{cc}
S_{\mathtt{ee}} & S_{\mathtt{ei}} \\ 
S_{\mathtt{ie}} & S_{\mathtt{ii}}
\end{array}
\right] , \\
L &=&\left[ 
\begin{array}{c}
V_{\mathtt{e}0^{\prime }} \\ 
V_{\mathtt{i}0^{\prime }}
\end{array}
\right] =\left[ 
\begin{array}{c}
L_{\mathtt{e}} \\ 
L_{\mathtt{i}}
\end{array}
\right] ,
\end{eqnarray*}
and $\left[ V_{0\mathtt{e}}\;V_{0\mathtt{i}}\right] =-SL^{\ast }$.

\begin{theorem}
We assemble a Belavkin matrix $\mathcal{F}\left( \mathbb{V},X\right) $ in $%
\frak{A}^{\left( 1+n_{\mathtt{e}}+1\right) \times \left( 1+n_{\mathtt{e}%
}+1\right) }$ with sub-blocks 
\begin{equation*}
\mathcal{F}\left( \mathbb{V},X\right) _{\alpha \beta }=V_{\alpha \beta
}+V_{\alpha \mathtt{i}}X\left( 1-V_{\mathtt{ii}}X\right) ^{-1}V_{\mathtt{i}%
\beta }
\end{equation*}
for $\alpha =0,\mathtt{e},0^{\prime }$ and $\beta =0,\mathtt{e},0^{\prime }$%
, where we fix a unitary operator in $X\in \mathbb{C}^{n_{\mathtt{i}}\times
n_{\mathtt{i}}}$ such that the inverse above exists. Then $\mathcal{F}\left( 
\mathbb{V},X\right) $ is again a $\star $-unitary, that is, 
\begin{equation}
\mathcal{F}\left( \mathbb{V},X\right) ^{\star }\mathcal{F}\left( \mathbb{V}%
,X\right) =\mathcal{F}\left( \mathbb{V},X\right) \mathcal{F}\left( \mathbb{V}%
,X\right) ^{\star }=\mathbb{I},
\end{equation}
so that $\mathcal{F}\left( \mathbb{V},X\right) $ determines a unitary
dynamics for the reduced set of $n_{\mathtt{e}}$ inputs. Moreover, we have
the identity 
\begin{equation}
\mathcal{F}\left( \mathbb{V},X\right) ^{\star }=\mathcal{F}\left( \mathbb{V}%
^{\star },X^{\dag }\right) .  \label{xxx}
\end{equation}
\end{theorem}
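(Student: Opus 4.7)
The plan is to prove the auxiliary identity \eqref{xxx} first, then exploit it to reduce the $\star$-unitarity to a single one-sided calculation. For \eqref{xxx} I would unpack $\mathcal{F}(\mathbb{V},X)^\star$ block by block: since $\mathbb{J}$ only swaps the $0,0'$ labels and fixes the $\mathtt{e}$ and $\mathtt{i}$ labels, the internal block transforms as $(\mathbb{V}^\star)_{\mathtt{ii}} = V_{\mathtt{ii}}^\dagger$ while blocks carrying an outer index pick up the involution $\alpha \mapsto \alpha^*$ (with $0^* = 0'$). Taking $\dagger$ of $\mathcal{F}(\mathbb{V},X)_{\beta^*\alpha^*}$, wrapping with $\mathbb{J}$, and invoking the resolvent identity $X(1-V_{\mathtt{ii}}X)^{-1} = (1-XV_{\mathtt{ii}})^{-1}X$ matches the result to $\mathcal{F}(\mathbb{V}^\star, X^\dagger)_{\alpha\beta}$ term by term.

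For the $\star$-unitarity I would recast $\mathbb{V}\mathbb{V}^\star = \mathbb{I}$ as $\mathbb{V}\mathbb{J}\mathbb{V}^\dagger = \mathbb{J}$, split $\mathbb{J} = \mathbb{J}_{\mathrm{ext}} \oplus \mathbb{I}_{\mathtt{i}}$ along the external/internal partition, and read off the four block identities
\begin{equation*}
V_{\mathtt{ee}}\mathbb{J}_{\mathrm{ext}} V_{\mathtt{ee}}^\dagger + V_{\mathtt{ei}} V_{\mathtt{ei}}^\dagger = \mathbb{J}_{\mathrm{ext}}, \quad V_{\mathtt{ee}}\mathbb{J}_{\mathrm{ext}} V_{\mathtt{ie}}^\dagger + V_{\mathtt{ei}} V_{\mathtt{ii}}^\dagger = 0,
\end{equation*}
\begin{equation*}
V_{\mathtt{ie}}\mathbb{J}_{\mathrm{ext}} V_{\mathtt{ee}}^\dagger + V_{\mathtt{ii}} V_{\mathtt{ei}}^\dagger = 0, \quad V_{\mathtt{ie}}\mathbb{J}_{\mathrm{ext}} V_{\mathtt{ie}}^\dagger + V_{\mathtt{ii}} V_{\mathtt{ii}}^\dagger = \mathbb{I}.
\end{equation*}
Writing $F := \mathcal{F}(\mathbb{V},X) = V_{\mathtt{ee}} + V_{\mathtt{ei}} X W V_{\mathtt{ie}}$ with $W := (1-V_{\mathtt{ii}}X)^{-1}$ and expanding $F\mathbb{J}_{\mathrm{ext}} F^\dagger$ into four pieces, each piece absorbs one of these relations and collapses to
\begin{equation*}
F\mathbb{J}_{\mathrm{ext}} F^\dagger = \mathbb{J}_{\mathrm{ext}} + V_{\mathtt{ei}}\, B\, V_{\mathtt{ei}}^\dagger,
\end{equation*}
where the bracket $B$ depends only on $V_{\mathtt{ii}}, V_{\mathtt{ii}}^\dagger, X, X^\dagger, W, W^\dagger$. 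The resolvent identity $W V_{\mathtt{ii}} X = W - \mathbb{I}$ together with $X^\dagger X = X X^\dagger = \mathbb{I}$ yields $X W V_{\mathtt{ii}} = X W X^\dagger - \mathbb{I}$ and its adjoint $V_{\mathtt{ii}}^\dagger W^\dagger X^\dagger = X W^\dagger X^\dagger - \mathbb{I}$; substituting these forces $B = 0$ through pairwise cancellation, so $FF^\star = \mathbb{I}$. Applying the same argument to the pair $(\mathbb{V}^\star, X^\dagger)$ produces $\mathcal{F}(\mathbb{V}^\star, X^\dagger)\mathcal{F}(\mathbb{V}^\star, X^\dagger)^\star = \mathbb{I}$, which by \eqref{xxx} and the involutivity of both $\star$ and $\dagger$ reads $F^\star F = \mathbb{I}$.

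The main obstacle is combinatorial rather than conceptual. Because $V_{\mathtt{ii}}$ is not unitary in isolation (it only satisfies $V_{\mathtt{ii}}^\dagger V_{\mathtt{ii}} = \mathbb{I} - V_{\mathtt{ei}}^\dagger \mathbb{J}_{\mathrm{ext}} V_{\mathtt{ei}}$), the invertibility of $1 - V_{\mathtt{ii}} X$ must be taken from the hypothesis rather than deduced from general principles, and the cancellation inside $B$ is invisible term by term. It emerges only after all four cross-block identities together with the resolvent identity and $X$-unitarity are invoked simultaneously; dropping any one of the four equations causes the expansion to look mismatched. Once this collapse is verified, the two-sided $\star$-unitarity and \eqref{xxx} follow in essentially one stroke.
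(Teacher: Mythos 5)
Your argument is correct, but it is not the route the paper takes. The paper's proof is essentially a reduction to Lemma 1: it computes $\mathcal{F}(\mathbb{V},X)$ explicitly, observes that it has the canonical upper-triangular Belavkin form with reduced Hudson--Parthasarathy parameters $(S^{\mathrm{red}},L^{\mathrm{red}},H^{\mathrm{red}})$ agreeing with the Gough--James feedback reduction formula of \cite{QFN1}, and lets the $\star$-unitarity follow from the lemma (with the unitarity of $S^{\mathrm{red}}$ and self-adjointness of $H^{\mathrm{red}}$ inherited from that reference); the identity $(\ref{xxx})$ is not separately argued there. You instead verify $\mathcal{F}(\mathbb{V},X)\,\mathbb{J}\,\mathcal{F}(\mathbb{V},X)^{\dag}=\mathbb{J}$ directly from the four block relations encoded in $\mathbb{V}\mathbb{J}\mathbb{V}^{\dag}=\mathbb{J}$ together with the resolvent identity $WV_{\mathtt{ii}}X=W-\mathbb{I}$ and $XX^{\dag}=\mathbb{I}$, prove $(\ref{xxx})$ by the companion identity $(1-X^{\dag}V_{\mathtt{ii}}^{\dag})^{-1}X^{\dag}=X^{\dag}(1-V_{\mathtt{ii}}^{\dag}X^{\dag})^{-1}$, and then obtain the other side of the unitarity by applying the first computation to $(\mathbb{V}^{\star},X^{\dag})$ --- I checked the cancellation in your bracket $B$ and it does close. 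This is in effect the diagonal case $X=Y$ of the Siegel identities the paper states in its final theorem, so your proof anticipates and subsumes that argument. What the paper's route buys is the explicit, physically meaningful reduced triple and the link to \cite{QFN1} (at the cost of leaving the verification to ``a straightforward calculation'' and an external reference, and of saying nothing about $(\ref{xxx})$); what your route buys is a self-contained two-sided proof that never needs to extract $(S,L,H)$ and that handles $(\ref{xxx})$ honestly. Two small points to tidy up: you should note, as the paper does, that the triangular Belavkin structure of $\mathcal{F}(\mathbb{V},X)$ (vanishing first column and last row of $\mathcal{F}-\mathbb{I}$) follows immediately from $V_{\mathtt{i}0}=V_{0^{\prime}\mathtt{i}}=0$, since ``determines a unitary dynamics'' needs that form and not just $\star$-unitarity; and you should record that $1-V_{\mathtt{ii}}^{\dag}X^{\dag}=\bigl((X^{\dag}-V_{\mathtt{ii}})X\bigr)^{\dag}$ is invertible exactly when $1-V_{\mathtt{ii}}X$ is, so $\mathcal{F}(\mathbb{V}^{\star},X^{\dag})$ is well defined under the stated hypothesis.
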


\begin{remark}
The matrix $X$ appearring above is typically an adjacency matrix in
applications, describing which internal outputs are to be connected to which
internal inputs. In engineering, it could be interpreted as a gain matrix.
We also point out that the involutions in $\left( \ref{xxx}\right) $\ are on
spaces of different dimensions. The first involves $\mathbb{J}_{n_{\mathtt{i}%
}}$ while the second involves $\mathbb{J}_{n}$.
\end{remark}

\begin{proof}
The construction of $\mathcal{F}\left( \mathbb{V},X\right) $\ is essentially
the rephrasing of the M\"{o}bius transformation associated with the
reduction, introduced in \cite{QFN1},\ in the language of Belavkin matrices.
The construction in $\left( \ref{V}\right) $ clearly yields a Belavkin
matrix over the remaining $n_{\mathtt{e}}$ external degrees of freedom. It
is a straightforward calculation to show that, for $X$ unitary, the matrix
takes the form 
\begin{equation*}
\mathcal{F}\left( \mathbb{V},X\right) =\left( 
\begin{tabular}{l|l|l}
$1$ & $-L^{\text{red}\dag }S$ & $-\frac{1}{2}L^{\text{red}\dag }L^{\text{red}%
}-iH^{\text{red}}$ \\ \hline
$0$ & $S^{\text{red}}$ & $L^{\text{red}}$ \\ \hline
$0$ & $0$ & $1$%
\end{tabular}
\right) ,
\end{equation*}
with the Hudson-Parthasarathy parametrizing operators $\left( S^{\text{red}%
},L^{\text{red}},H^{\text{red}}\right) $\ given by 
\begin{eqnarray*}
S^{\text{red}} &=&S_{\mathtt{ee}}+S_{\mathtt{ei}}\left( X^{-1}-S_{\mathtt{ii}%
}\right) ^{-1}S_{\mathtt{ie}}, \\
L^{\text{red}} &=&L_{\mathtt{e}}+S_{\mathtt{ei}}^{-1}\left( X^{-1}-S_{%
\mathtt{ii}}\right) L_{\mathtt{i}}, \\
H^{\text{red}} &=&H+\sum_{i=\mathtt{i},\mathtt{e}}\text{Im}L_{j}^{\dag }S_{j%
\mathtt{i}}^{-1}\left( X^{-1}-S_{\mathtt{ii}}\right) L_{\mathtt{i}},
\end{eqnarray*}
in agreement with \cite{QFN1}.
\end{proof}

For $\mathbb{X}\in \frak{A}^{\left( 1+n+1\right) \times \left( 1+m+1\right)
} $ we shall introduce the extended convention $\mathbb{X}^{\star }:=\mathbb{%
J}_{m}\mathbb{X}^{\dag }\mathbb{J}_{n}$. Let $\mathbb{V}$ be the Belavkin
matrix generating a unitary quantum dynamics as above, we define the
M\"{o}bius transformation $\Phi :D\mapsto \frak{A}^{\left( 1+n_{\mathtt{e}%
}+1\right) \times \left( 1+n_{\mathtt{e}}+1\right) }$ by $\Phi =\mathcal{F}%
\left( \mathbb{V},\cdot \right) $ with domain $D=\{X\in \mathbb{C}^{n_{%
\mathtt{i}}\times n_{\mathtt{i}}}:\mathbb{I}-V_{\mathtt{ii}}X$ \ is
invertible\}.

\begin{theorem}
The mapping $\Phi $ satisfies the Siegel type identities 
\begin{eqnarray*}
\Phi \left( X\right) ^{\star }\Phi \left( Y\right) &=&\mathbb{I}+\left( 
\begin{array}{c}
V_{0\mathtt{i}} \\ 
V_{\mathtt{ei}} \\ 
V_{0^{\prime }\mathtt{i}}
\end{array}
\right) ^{\star }\left( 1-X^{\dag }V_{\mathtt{ii}}^{\dag }\right)
^{-1}\left( X^{\dag }Y-1\right) \left( 1-V_{\mathtt{ii}}Y\right) ^{-1}\left( 
\begin{array}{c}
V_{0\mathtt{i}} \\ 
V_{\mathtt{ei}} \\ 
V_{0^{\prime }\mathtt{i}}
\end{array}
\right) , \\
\Phi \left( X\right) \Phi \left( Y\right) ^{\star } &=&\mathbb{I}+\left( V_{%
\mathtt{i}0},V_{\mathtt{ie}},V_{\mathtt{i}0^{\prime }}\right) \left( 1-XV_{%
\mathtt{ii}}\right) ^{-1}\left( XY^{\dag }-1\right) \left( 1-V_{\mathtt{ii}%
}^{\dag }Y\right) ^{-1}\left( V_{\mathtt{i}0},V_{\mathtt{ie}},V_{\mathtt{i}%
0^{\prime }}\right) ^{\star }.
\end{eqnarray*}
In particular, $\Phi $ maps unitaries to $\star $-unitaries.
\end{theorem}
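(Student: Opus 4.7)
I would first recast the feedback formula in a $2\times 2$ block form. Let $\mathbb{V}_{\mathtt{e}}$ denote the external Belavkin block (entries $V_{\alpha\beta}$ with $\alpha,\beta\in\{0,\mathtt{e},0^{\prime}\}$) and group the coupling rows and columns into
\begin{equation*}
C=\left(\begin{array}{c}V_{0\mathtt{i}}\\V_{\mathtt{ei}}\\V_{0^{\prime}\mathtt{i}}\end{array}\right),\qquad B=(V_{\mathtt{i}0},V_{\mathtt{ie}},V_{\mathtt{i}0^{\prime}}),
\end{equation*}
so that Theorem 4 reads $\Phi(X)=\mathbb{V}_{\mathtt{e}}+CX(\mathtt{I}-V_{\mathtt{ii}}X)^{-1}B$ as a Schur-complement-style linear fractional transformation. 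Since internal indices carry trivial $\mathbb{J}$, the $\star$-involute is
\begin{equation*}
\Phi(X)^{\star}=\mathbb{V}_{\mathtt{e}}^{\star}+B^{\star}(\mathtt{I}-X^{\dag}V_{\mathtt{ii}}^{\dag})^{-1}X^{\dag}C^{\star},
\end{equation*}
with $B^{\star}=\mathbb{J}_{n_{\mathtt{e}}}B^{\dag}$ and $C^{\star}=C^{\dag}\mathbb{J}_{n_{\mathtt{e}}}$.

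Next I would unpack $\mathbb{V}^{\star}\mathbb{V}=\mathbb{I}$ in this block decomposition to read off the four identities
\begin{equation*}
\mathbb{V}_{\mathtt{e}}^{\star}\mathbb{V}_{\mathtt{e}}+B^{\star}B=\mathbb{I},\ \ \mathbb{V}_{\mathtt{e}}^{\star}C+B^{\star}V_{\mathtt{ii}}=0,\ \ C^{\star}\mathbb{V}_{\mathtt{e}}+V_{\mathtt{ii}}^{\dag}B=0,\ \ C^{\star}C+V_{\mathtt{ii}}^{\dag}V_{\mathtt{ii}}=\mathtt{I},
\end{equation*}
together with the four dual identities (obtained by exchanging $B\leftrightarrow C$ and $\star\leftrightarrow\dag$ in the correct slots) coming from $\mathbb{V}\mathbb{V}^{\star}=\mathbb{I}$. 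Expanding $\Phi(X)^{\star}\Phi(Y)$ into four cross-terms and substituting these identities eliminates every occurrence of $\mathbb{V}_{\mathtt{e}}^{\star}\mathbb{V}_{\mathtt{e}}$, $\mathbb{V}_{\mathtt{e}}^{\star}C$, $C^{\star}\mathbb{V}_{\mathtt{e}}$, and $C^{\star}C$, collapsing the whole expression to the form $\mathbb{I}+B^{\star}[\,\cdot\,]B$.

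The remaining bracket is simplified using the elementary resolvent identities $\mathtt{I}+V_{\mathtt{ii}}Y\beta=\beta$ and $\mathtt{I}+\alpha X^{\dag}V_{\mathtt{ii}}^{\dag}=\alpha$, with $\alpha=(\mathtt{I}-X^{\dag}V_{\mathtt{ii}}^{\dag})^{-1}$ and $\beta=(\mathtt{I}-V_{\mathtt{ii}}Y)^{-1}$, together with the key factorisation
\begin{equation*}
\alpha\, X^{\dag}V_{\mathtt{ii}}^{\dag}V_{\mathtt{ii}}Y\,\beta=(\alpha-\mathtt{I})(\beta-\mathtt{I}),
\end{equation*}
which is immediate from $\alpha-\mathtt{I}=\alpha X^{\dag}V_{\mathtt{ii}}^{\dag}$ and $\beta-\mathtt{I}=V_{\mathtt{ii}}Y\beta$. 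After cancellations the bracket factors cleanly as $\alpha(X^{\dag}Y-\mathtt{I})\beta$, producing the first stated Siegel identity. The second identity follows by the entirely symmetric manipulation starting from $\Phi(X)\Phi(Y)^{\star}$ and invoking the dual block identities.

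The final clause is then immediate: setting $Y=X$ with $X$ unitary makes the central factor $X^{\dag}Y-\mathtt{I}=XY^{\dag}-\mathtt{I}=0$, so both right-hand sides collapse to $\mathbb{I}$, establishing $\Phi(X)^{\star}\Phi(X)=\Phi(X)\Phi(X)^{\star}=\mathbb{I}$. The main obstacle I anticipate is the careful $\mathbb{J}$-bookkeeping for the rectangular off-diagonal blocks $B$ and $C$, whose $\star$-involutes apply $\mathbb{J}_{n_{\mathtt{e}}}$ only on the external side, and matching the resulting objects to the "column/row" shorthand employed in the theorem statement; once these conventions are pinned down the rest is a direct algebraic unwinding powered by the eight block unitarity relations.
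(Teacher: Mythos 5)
Your proposal is correct and takes essentially the same approach as the paper: the authors' proof simply asserts that the relations follow from the standard Siegel-identity manipulations with the $\star$-involution replacing $\dag $ (citing Young), and your block decomposition of $\Phi $, the eight unitarity relations from $\mathbb{V}^{\star }\mathbb{V}=\mathbb{VV}^{\star }=\mathbb{I}$, and the resolvent factorisation are exactly those manipulations carried out explicitly. If anything, your write-up supplies the algebra the paper omits, including the careful bookkeeping of which coupling block ($B$ versus $C$) flanks the central factor.
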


\begin{proof}
The form of these relations are similar to the standard Siegel identities,
see for instance \cite{Young}, but with the $\star $-involution now
replacing the usual \dag . The algebraic manipulations involved are
otherwise identical.
\end{proof}

We remark that the standard Siegel type identities have independently been
extended in an entirely different direction to deal with Bogoliubov
transformations in a recent paper of Gough, James and Nurdin \cite{SQLFN}.
They replace the usual \dag -involution with an alternative involution, this
time on the space of doubled up matrices required to describe the symplectic
structure, however they similarly rely on the argument used in the above
proof.

\bigskip

\end{document}